\newcommand\mc[1]{\mathcal{#1}}
\newcommand\cH{\mc{H}}
\newcommand\cK{\mc{K}}
\newcommand\VH{\mc{V}(\cH)}
\newcommand\ra{\rightarrow}
\newtheorem{theorem}{Theorem}
\newtheorem{definition}{Definition}
\begin{document}

\title{From no-signalling to quantum states}
\author{Markus Frembs}
\email{m.frembs@griffith.edu.au}
\affiliation{Centre for Quantum Dynamics, Griffith University,\\ Yugambeh Country, Gold Coast, QLD 4222, Australia}
\vspace{-0.5cm}
\author{Andreas D\"oring}
\email{andreas.doering@posteo.de}

\begin{abstract}
    Characterising quantum correlations from physical principles is a central problem in the field of quantum information theory. Entanglement breaks bounds on correlations put by Bell's theorem, thus challenging the notion of \emph{local causality} as a physical principle. A natural relaxation is to study \emph{no-signalling} as a constraint on joint probability distributions. 
    It has been shown that when considered with respect to so-called locally quantum observables, bipartite non-signalling correlations never exceed their quantum counterparts; still, such correlations generally do not derive from quantum states. 
    This leaves open the search for additional principles which identify quantum states within the larger set of (collections of) non-signalling joint probability distributions over locally quantum observables. Here, we suggest a natural generalisation of no-signalling in the form of \emph{no-disturbance} to dilated systems. We prove that non-signalling joint probability distributions satisfying this extension correspond with bipartite quantum states up to a choice of time orientation in subsystems.
\end{abstract}

\maketitle

\section{Introduction} 
Given the violation of Bell inequalities \cite{AspectEtAl1981,ShalmEtAl2015,ZeilingerEtAl2015} (see also, \cite{WuShaknov1950}) and the paramount importance of entanglement for quantum information theory \cite{Shor1994,RaussendorfBriegel2001,AcinGisinMasanes2006,WisemanJonesDoherty2007,PawlowskiBrunner2011,deVicente2014,BravyiGossetKoenig2020,google2020},
characterising quantum correlations constitutes an important and ongoing research objective \cite{PawlowskiEtAl2009,CabelloSeveriniWinter2010,vanDam2013,FritzEtAl2013,JanottaHinrichsen2014,Popescu2014}. One line of this research has focused on the fact that while quantum mechanics is hardly reconcilable with the classical notion of \emph{local causality} \cite{Bell1976},
quantum correlations obey the more general principle of \emph{no-signalling}: 
the joint probability distributions for different local measurements $a,b$ with respective outcome sets $\{A\}$, $\{B\}$ marginalise to the same local distributions,
\begin{align}\label{eq: no-signalling probability distributions}
    \mu(A \mid a) = \sum_B \mu(A,B \mid a,b) \quad \quad \quad
    \mu(B \mid b) = \sum_A \mu(A,B \mid a,b)\; .
\end{align}
Importantly, Eq.~(\ref{eq: no-signalling probability distributions}) depends on the set of local measurements, consequently one must specify the possible choices of measurements on either subsystem in order to evaluate the constraints inherent to no-signalling. For instance, the PR-box correlations (for the CHSH scenario) restrict to just two measurements on either side \cite{PopescuRohrlich1994}. A physically more interesting scenario is that in which arbitrary local measurements are allowed. Remarkably, one can show that this already bounds the correlations to be quantum in the bipartite case \cite{BarnumEtAl2010,AcinEtAl2010}. Nevertheless, the collections of non-signalling distributions over the set of locally quantum observables do not correspond with quantum states \cite{KlayRandallFoulis1987,Wallach2002}. There are more non-signalling distributions than quantum states, i.e., while the correlations are as strong as quantum ones, the underlying distributions need not derive from a quantum state. 

In this paper we identify two physical principles that characterise those non-signalling bipartite correlations which correspond with quantum states. To set the stage, we review some basic facts about correlations over non-signalling correlations in Sec.~\ref{sec: No-signalling and locally quantum observables}. In Sec.~\ref{sec: Contextuality and no-disturbance} we reformulate the problem from the perspective of contextuality and identify \emph{no-disturbance} as the key underlying principle. This subsumes no-signalling, but makes explicit the intimate relationship with non-contextuality. Based on this, we suggest an extension of the no-disturbance principle to \emph{dilated systems} in Sec.~\ref{sec: no-disturbance and dilations}. Thm.~\ref{thm: from dilations to Jordan *-homomorphisms} shows that this strengthened principle almost singles out quantum states. In Sec.~\ref{sec: local unitary evolution} we identify the missing piece of data by introducing a notion of \emph{time orientation}. Our main result, Thm.~\ref{thm: main result}, proves that under a related consistency condition with respect to \emph{unitary evolution in subsystems} correlations indeed derive from quantum states. 
Sec.~\ref{sec: Conclusion and outlook} concludes.

\section{No-signalling and locally quantum observables}\label{sec: No-signalling and locally quantum observables}

Throughout, we denote by $\mc{L}(\cH)$ the algebra of linear operators on some finite-dimensional Hilbert space $\cH$, by $\mc{P}(\cH)$ the lattice of projections on $\cH$, and by $\mc{L}(\mc{H})_\mathrm{sa}$ the real-linear space of self-adjoint (Hermitian) operators, representing the observables of a system.

In contrast to restricted sets of observables such as those in \cite{PopescuRohrlich1994}, \cite{BarnumEtAl2010} study no-signalling for all \emph{locally quantum} observables. Locally quantum here means that every local system is described in terms of an observable algebra $\mc{L}(\cH)_\mathrm{sa}$ of self-adjoint (Hermitian) operators corresponding to the respective quantum system described by the Hilbert space $\cH$. However, rather than assuming the tensor product structure between Hilbert spaces, the composite system is described solely in terms of product observables, i.e., pairs $(a_1,a_2)$ where $a_i \in \mc{L}(\cH_i)_\mathrm{sa}$ for $i = 1,2$. Using the spectral decomposition of observables $a_i = \sum_j A_j q_i^j \in \mc{L}(\cH_i)_\mathrm{sa}$ for $A_j \in \mathbb{R}$ and $q^j_i \in \mc{P}(\cH_i)$, bipartite correlations arise from measures $\mu: \mc{P}(\cH_1) \times \mc{P}(\cH_2) \ra [0,1]$, $\mu(\mathbbm{1}) = \mu(\mathbbm{1}_1,\mathbbm{1}_2) = 1$ (here, $\mathbbm{1}_i \in \mc{L}(\cH_i)$ denotes the identity matrix). For the latter the no-signalling constraints in Eq.~(\ref{eq: no-signalling probability distributions}) read:
\begin{equation}\label{eq: no-signalling for measures}
    \mu(q_1) := \mu(q_1,\mathbbm{1}_2) = \sum_j \mu(q_1,q_2^j) \quad \quad \quad
    \mu(q_2) := \mu(\mathbbm{1}_1,q_2) = \sum_j \mu(q_1^j,q_2)
\end{equation}
for all mutually orthogonal sets of projections $(q_i^j)_j$, i.e., $\sum_j q^j_i = \mathbbm{1}_i$ and $q^j_iq^k_i = \delta_{jk}$.\footnote{For convenience, we restrict ourselves to (measures over) projective measurements. It is straightforward to extend the discussion to positive operator-valued measures (POVM), equivalently effect spaces. This allows to include the two-dimensional case in Thm.~\ref{thm: KlayRandallFoulis} (see \cite{BarnumEtAl2010,FrembsDoering2019b}).} 
For later reference, we point out that the structure of observables implies that (in addition to no-signalling) $\mu$ is also independent of contexts, in a sense to be made precise in Sec.~\ref{sec: Contextuality and no-disturbance}.

Applying Gleason's theorem \cite{Gleason1975} to the respective subsystems one shows that $\mu$ extends to a linear functional $\sigma_\mu: \mc{L}(\cH_1\otimes\cH_2) \ra \mathbb{C}$, equivalently, that there exists a linear operator $\rho_\mu$ such that $\sigma_\mu(a) = 
\mathrm{tr}[\rho_\mu a]$ for every $a 
\in \mc{L}(\cH_1\otimes\cH_2)$. 
Since $\mu$ is positive, $\sigma_\mu$ is positive on all product observables $a_1 \otimes a_2$ with $a_1 \in \mc{L}(\cH_1)_+$ and $a_2 \in \mc{L}(\cH_2)_+$. We call such normalised, i.e., $\mathrm{tr}[\rho_\mu]=1$, linear functionals \emph{positive on pure tensor (POPT) states}.

\begin{theorem}\label{thm: KlayRandallFoulis}
    \cite{KlayRandallFoulis1987,BarnumEtAl2010}
    Let $\cH_1$, $\cH_2$ be Hilbert spaces with $\mathrm{dim}(\cH_1),\mathrm{dim}(\cH_2) \geq 3$ finite. There is a one-to-one correspondence between measures $\mu: \mc{P}(\cH_1) \times \mc{P}(\cH_2) \ra [0,1]$, which satisfy the no-signalling constraints in Eq.~(\ref{eq: no-signalling for measures}), and POPT states $\sigma_\mu: \mc{L}(\cH_1\otimes\cH_2) \ra \mathbb{C}$. 
\end{theorem}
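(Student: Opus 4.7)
The plan is to iterate Gleason's theorem on each factor in turn, using $\dim(\cH_i) \geq 3$ on both sides.

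First I would extract from Eq.~(\ref{eq: no-signalling for measures}) that $\mu$ is \emph{separately finitely additive} on projections. Given $q_2', q_2'' \in \mc{P}(\cH_2)$ with $q_2'q_2'' = 0$, both $\{q_2', q_2'', \mathbbm{1}_2 - q_2' - q_2''\}$ and $\{q_2' + q_2'', \mathbbm{1}_2 - q_2' - q_2''\}$ are complete resolutions of $\mathbbm{1}_2$; testing no-signalling against $q_1$ on both and subtracting yields $\mu(q_1, q_2' + q_2'') = \mu(q_1, q_2') + \mu(q_1, q_2'')$, and symmetrically in the first slot.

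Next, fix $q_2 \in \mc{P}(\cH_2)$. The map $q_1 \mapsto \mu(q_1, q_2)$ is a positive finitely additive measure on $\mc{P}(\cH_1)$, bounded by $\mu(q_2) \leq 1$. Since $\dim(\cH_1) \geq 3$, Gleason's theorem produces a unique positive operator $T(q_2) \in \mc{L}(\cH_1)$ with $\mu(q_1, q_2) = \mathrm{tr}[T(q_2)\,q_1]$. Separate additivity in the second slot together with uniqueness of $T$ then give $T(q_2' + q_2'') = T(q_2') + T(q_2'')$. In particular, for every unit vector $\phi \in \cH_1$ the map $q_2 \mapsto \mathrm{tr}[T(q_2)\,|\phi\rangle\langle\phi|]$ is a positive bounded finitely additive measure on $\mc{P}(\cH_2)$. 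Since $\dim(\cH_2) \geq 3$, a second application of Gleason yields a positive operator $S_\phi \in \mc{L}(\cH_2)$ representing it. Polarising $\phi \mapsto S_\phi$ to off-diagonal matrix elements of $T$ (each written as a signed combination of four positive measures, to which Gleason again applies) extends $T$ uniquely to a linear map $\tilde{T}: \mc{L}(\cH_2) \to \mc{L}(\cH_1)$.

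The bilinear pairing $(a_1, a_2) \mapsto \mathrm{tr}[\tilde{T}(a_2)\,a_1]$ descends through the canonical isomorphism $\mc{L}(\cH_1) \otimes \mc{L}(\cH_2) \cong \mc{L}(\cH_1 \otimes \cH_2)$ to a linear functional $\sigma_\mu$, represented via Riesz duality by a trace-class operator $\rho_\mu$ with $\sigma_\mu(\,\cdot\,) = \mathrm{tr}[\rho_\mu\,\cdot\,]$; normalisation follows from $\sigma_\mu(\mathbbm{1}) = \mu(\mathbbm{1}_1,\mathbbm{1}_2) = 1$. For POPT positivity, spectrally decompose $a_i = \sum_j A_j^i\,q_i^j \geq 0$ with $A_j^i \geq 0$; then $\sigma_\mu(a_1 \otimes a_2) = \sum_{j,k} A_j^1 A_k^2\,\mu(q_1^j, q_2^k) \geq 0$. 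The converse direction is immediate: given a POPT state $\sigma$, set $\mu(q_1, q_2) := \sigma(q_1 \otimes q_2)$ and read off Eq.~(\ref{eq: no-signalling for measures}) from linearity of $\sigma$ applied to $\sum_j q_i^j = \mathbbm{1}_i$.

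The main subtlety is the second Gleason step: one must ensure the operator-valued measure $q_2 \mapsto T(q_2)$ genuinely extends linearly to all of $\mc{L}(\cH_2)$, not merely to its self-adjoint part. This forces one to treat complex off-diagonal matrix elements of $T(q_2)$ via polarisation and verify that each of the resulting scalar measures is bounded before invoking Gleason. The dimension restriction $\dim(\cH_i) \geq 3$ is used precisely for the two Gleason invocations; the two-dimensional case requires passing to effect algebras / POVMs, as noted in the footnote.
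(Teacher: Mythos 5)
Your proposal is correct and follows essentially the same route the paper takes (and the cited references use): apply Gleason's theorem in each tensor factor, using the no-signalling constraints to get separate finite additivity, then polarise to obtain the bilinear extension and hence the linear functional $\sigma_\mu$ on $\mc{L}(\cH_1\otimes\cH_2)$, with POPT positivity read off from spectral decompositions. The only detail worth adding is in the converse direction, where $\mu(q_1,q_2)\leq 1$ follows from positivity of $\sigma$ on the pure tensors $(\mathbbm{1}_1-q_1)\otimes q_2$ and $\mathbbm{1}_1\otimes(\mathbbm{1}_2-q_2)$ together with $\sigma(\mathbbm{1})=1$.
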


Crucially, a POPT state $\sigma_\mu$ is generally not positive. Hence, $\mu$ defines a quantum state if and only if $\sigma_\mu$ is positive. Despite this discrepancy, \cite{BarnumEtAl2010,AcinEtAl2010} show that the correlations arising from POPT states do not exceed those arising from quantum states in the bipartite case.

The existence of unphysical POPT states gives rise to the problem of finding a sharper classification that rules out such states.
This paper offers a solution to this problem. To this end, we first argue that rather than no-signalling, the natural principle underlying Eq.~(\ref{eq: no-signalling for measures}) is no-disturbance. We then extend the scope of the no-disturbance principle to \emph{dilations of local systems}, and show that up to a consistency condition with respect to \emph{unitary evolution in subsystems} this establishes a one-to-one correspondence between non-signalling joint probability distributions and bipartite quantum states.

We remark that distributions over product observables have been studied before, e.g. in \cite{KlayRandallFoulis1987}, out of the attempt to define a tensor product intrinsic to quantum logic.\footnote{For a recent contribution to this problem, see \cite{AbramskyBarbosa2021} (and references therein).} We do not consider this problem here, but note that our result has the potential to give new impulse to this research programme. Another perspective on this problem is given by Wallach in \cite{Wallach2002}, who considers a generalisation of Gleason's theorem to composite systems. Since it distracts from the physical significance of the present discussion, we study this problem in more generality elsewhere \cite{FrembsDoering2022b}.

\section{Non-contextuality and no-disturbance}\label{sec: Contextuality and no-disturbance}

A crucial feature of locally quantum non-signalling joint probability distributions is that they satisfy a more general constraint than no-signalling, called no-disturbance. We introduce this notion in the following sections, before extending the no-disturbance principle to dilations in Sec.~\ref{sec: no-disturbance and dilations}. Our approach naturally embeds within the analysis of the principal role of contextuality in quantum theory \cite{FreDoe19a}. For a recent review of the wider subject, see \cite{Masse2021}.

\subsection{Non-contextuality and marginalisation constraints}

Contextuality is a key principle in foundations, separating quantum from classical physics \cite{Specker1960,KochenSpecker1967,IshamButterfieldI,DoeringIsham2011,AbramskyBrandenburger2011,CabelloSeveriniWinter2010}. Moreover, it has been shown to be a resource for quantum computation in various architectures \cite{Howard2014,Raussendorf2016,BravyiGossetKoenig2020}.

At the core of contextuality lies the notion of \textit{simultaneous measurability}, which equips the set of observables with a reflexive, symmetric, but generally non-transitive relation \cite{Specker1960}. We call any subset of simultaneously measurable observables a \textit{context}. The set of all contexts carries an intrinsic order relation arising from \textit{coarse-graining} on outcomes of observables. The resulting partially ordered set is called the \textit{partial order of contexts}. For the close connection between contextuality in this sense and the non-existence of valuation functions in the sense of Kochen and Specker \cite{KochenSpecker1967}, we refer to \cite{FreDoe19a}.

In quantum theory, two observables are simultaneously measurable if and only if they commute. Consequently, contexts are given by commutative subalgebras $V \subseteq \mc{L}(\mc{H})$, which are ordered by inclusion into the corresponding partial order of contexts denoted by $\mc{V}(\mc{H})$. 
In this setup a quantum state becomes a collection of probability distributions $(\mu_V)_{V \in \mc{V}(\mc{H})}$, one for every context. Moreover, \textit{coarse-graining} constrains these \textit{across} contexts: let $\mu_{\tilde{V}}$, $\mu_V$ be probability distributions in contexts $\tilde{V},V \in \VH$ such that $\tilde{V} \subset V$, and denote by $i_{\tilde{V}V}: 
\mc{P}(\tilde{V}) \hookrightarrow \mc{P}(V)$ the inclusion relation between their respective projections, then $\mu_{\tilde{V}}$ is obtained from $\mu_V$ by marginalisation,
\begin{equation}\label{eq: marginalisation}
    \forall q \in \mc{P}(\tilde{V}):\quad \mu_{\tilde{V}}(q) = \mu_V(i_{\tilde{V}V}(q)) = \mu_V|_{\tilde{V}}(q)\; .
\end{equation}
While individually inconspicuous, marginalisation constraints impose a strong condition in conjunction with \emph{non-contextuality}: probabilities of events corresponding to an observable $a \in \tilde{V} \subset V,V'$ are independent of other observables $b \in V$, $c \in V'$, i.e., they are independent of context:
\begin{equation}\label{eq: non-contextuality}
    \forall \tilde{V},V,V' \in\mc{V}(\cH), \tilde{V} \subset V,V': \quad \mu_{V'}|_{\tilde{V}} = \mu_{\tilde{V}} = \mu_V|_{\tilde{V}}\; .
\end{equation}
This non-contextuality assumption on probability distributions $\mu_V$ is at the heart of Gleason's theorem \cite{Gleason1975}; for a reformulation of the latter in this language 
see \cite{Doering2008,FreDoe19a,FrembsDoering2022b}.

\subsection{No-signalling from no-disturbance}

There is a natural notion of composition for (partial orders of) contexts: their canonical product, denoted $\mc{V}(\cH_1) \times \mc{V}(\cH_2)$, is the Cartesian product with elements $(V_1,V_2)$ for $V_1 \in \mc{V}(\cH_1)$, $V_2 \in \mc{V}(\cH_2)$ and order relations such that for all $\tilde{V}_1,V_1 \in \mc{V}(\cH_1)$, $\tilde{V}_2,V_2 \in \mc{V}(\cH_2)$,
\begin{equation}\label{eq: product context category}
    (\tilde{V}_1,\tilde{V}_2) \subseteq (V_1,V_2) :\Longleftrightarrow \tilde{V}_1 \subseteq_1 V_1 \mathrm{\ and \ } \tilde{V}_2 \subseteq_2 V_2\; .
\end{equation}
Restricted to product contexts, Eq.~(\ref{eq: non-contextuality}) says that for all $\tilde{V}_1 \subset V_1,V_1' \in\mc{V}(\cH_1)$ and $\tilde{V}_2 \subset V_2,V_2' \in\mc{V}(\cH_2)$
\begin{equation}\label{eq: no-disturbance}
     \mu_{(V_1, V_2)}|_{(\tilde{V}_1, V_2)} = \mu_{(\tilde{V}_1, V_2)} = \mu_{(V_1', V_2)}|_{(\tilde{V}_1, V_2)} \quad \quad \quad
     \mu_{(V_1, V_2)}|_{(V_1, \tilde{V}_2)} = \mu_{(V_1, \tilde{V}_2)} = \mu_{(V_1, V'_2)}|_{(V_1, \tilde{V}_2)}\; .
\end{equation}
We call the collection of constraints over all contexts $(V_1,V_2) \in \mc{V}(\cH_1) \times \mc{V}(\cH_2)$ in Eq.~(\ref{eq: no-disturbance}) the \emph{no-disturbance principle}.\footnote{The term appears in \cite{RamanathanEtAl2012}, but has been used under different names before, e.g. in \cite{CabelloSeveriniWinter2010} (see also, \cite{DoeringIsham2011,FreDoe19a}).}
Note that no-disturbance reduces to no-signalling in Eq.~(\ref{eq: no-signalling probability distributions}) when restricted to the trivial contexts $1_i := \mathbb{C}\mathbbm{1}_i \subset \mc{L}(\cH_i)$\footnote{The respective contexts correspond to the trivial events of observing that there is a local system.} on the respective local subsystem,
\begin{equation}\label{eq: no-signalling from no-disturbance}
    \mu_{(V_1, V_2)}|_{(V_1, 1_2)} = \mu_{(V_1,1_2)} = \mu_{V_1,V'_2}|_{(V_1,1_2)} \quad \quad \quad 
    \mu_{(V_1, V_2)}|_{(1_1,V_2)} = \mu_{(1_1,V_2)} = \mu_{(V'_1,V_2)}|_{(1_1,V_2)}\; . 
\end{equation}
Observe that general non-signalling joint probability distributions over locally quantum observables correspond with collections of probability distributions $(\mu_V)_{V \in \mc{V}(\cH_1) \times \mc{V}(\cH_2)}$ satisfying Eq.~(\ref{eq: no-signalling from no-disturbance}). The latter are more general than the non-signalling measures $\mu: \mc{P}(\cH_1) \times \mc{P}(\cH_2) \ra [0,1]$ in Eq.~(\ref{eq: no-signalling for measures}), which satisfy the more restrictive no-disturbance condition in Eq.~(\ref{eq: no-disturbance}). In this sense, no-disturbance is already implicit in \cite{BarnumEtAl2010}. With the latter, we re-emphasise that $\mc{V}(\cH_1) \times \mc{V}(\cH_2) \neq \mc{V}(\cH_1 \otimes \cH_2)$ is a meagre part of the full quantum structure of the composite Hilbert space $\cH_1 \otimes \cH_2$. In particular, the product system is not described by the tensor product of the individual Hilbert spaces.

Having established the crucial role of non-contextuality and no-disturbance, in the next section we extend this principle to dilations on local subsystems.

\section{No-disturbance for dilations}\label{sec: no-disturbance and dilations}

A key idea of this work is to impose the no-disturbance condition in Eq.~(\ref{eq: no-disturbance}) not only between locally quantum observables, but to extend its scope to dilations of (at least) one of the two subsystems. We give some motivational background for this step first.

Recall that by Thm.~\ref{thm: KlayRandallFoulis} a measure $\mu: \mc{P}(\cH_1) \times \mc{P}(\cH_2) \ra [0,1]$ defines a POPT state $\mu(q_1,q_2) = \sigma_\mu(q_1 \otimes q_2) = \mathrm{tr}[\rho_\mu(q_1\otimes q_2)]$ with corresponding linear operator $\rho_\mu \in \mc{L}(\cH_1 \otimes \cH_2)$. Similarly, for every $q_1 \in \mc{P}(\cH_1)$ we define a positive linear operator $\rho_\mu(q_1) \in \mc{L}(\cH_2)_+$ from 
$\mu(q_1,q_2) =: \mathrm{tr}_{\cH_2}[\rho_\mu(q_1)q_2]$ for all $q_2 \in \mc{P}(\cH_2)$. It follows from Thm.~\ref{thm: KlayRandallFoulis} that $\rho_\mu$ extends to a positive linear map $\phi_\mu: \mc{L}(\cH_1) \ra \mc{L}(\cH_2)$. On the other hand, for any single $q_1 \in \mc{P}(\cH_1)$ we may assume 
$\rho_\mu(q_1) \in \mc{L}(\cH_2)_+$ to arise via coarse-graining from some 
$|\psi_\mu(q_1)\rangle \in \cK := \cH_2 \otimes \cH_E$ on a larger system (to be thought of as the system together with an (environmental) ancillary system), by tracing out the extra degrees of freedom:
\begin{equation}\label{eq: purification}
    \rho_\mu(q_1)
    = \mathrm{tr}_{\cH_E}[|\psi_\mu(q_1)\rangle\langle\psi_\mu(q_1)|]
    = \mathrm{tr}_{\cH_E}[u^*(q_1 \otimes |\psi_\mu\rangle\langle\psi_\mu|)u]\; .
\end{equation}
$|\psi_\mu(q_1)\rangle$ is a \emph{purification} of 
$\rho_\mu(q_1)$,
where the unitary $u \in \mc{L}(\cH_2 \otimes \cH_E)$ is chosen to separate the input from the ancillary system. We want to arrange the purifications for all $q_1 \in \mc{P}(\cH_1)$ in an economical way. To start with, note that since $\rho_\mu(q_1) \in \mc{L}(\cH_2)_+$ for every $q_1 \in \mc{P}(\cH_1)$, $(\rho_\mu(q_1^i))_i$ defines a non-normalised ($\sum_i \rho_\mu(q_1^i) = \rho_\mu(\mathbbm{1}_1) = \mathrm{tr}_{\cH_1}[\rho_\mu]]$) positive operator-valued measure (POVM) on $\mc{L}(\cH_2)$ for every set of mutually orthogonal projections $(q_1^i)_i$ in $\mc{L}(\cH_1)$, i.e., for every context $V_1 \in \mc{V}(\cH_1)$. We may thus write
\begin{equation}\label{eq: dilation in context}
    \rho^{V_1}_\mu(q_1^i)
    = \mathrm{tr}_{\cH_E}\left[u^*\left(q_1^i \otimes |\psi_\mu^{V_1}\rangle\langle\psi_\mu^{V_1}|\right)u\right]\; ,
\end{equation}
for all $q_1^i \in \mc{P}(V_1)$ and $V_1 \in \mc{V}(\cH_1)$, mimicking the purification in Eq.~(\ref{eq: purification}). Eq.~(\ref{eq: dilation in context}) is called a \emph{dilation} of the POVM $(\rho_\mu(q_1^i))_i$ to the projection-valued measure (PVM) $q_1 \mapsto q_1 \otimes \mathbbm{1}_{\cH_E}$. More generally, by Naimark's theorem \cite{Naimark1943,Stinespring1955} every POVM $\rho^{V_1}_\mu$ admits a dilation to a PVM $\varphi_\mu^{V_1}: \mc{P}(V_1) \ra \mc{P}(\cK)$ such that $\rho_\mu = v^*\varphi_\mu^{V_1}v$, where $v: \cH_2 \ra \cK$ is a linear map.\footnote{Here, we concentrate the dependence on contexts in the mapping $V_1 \mapsto \varphi^{V_1}_\mu$, by choosing $\cK$ sufficiently large \cite{Naimark1943,Stinespring1955,Choi1975} and 
by absorbing any context dependence on $v_{V_1}$ into $\varphi_\mu^{V_1}$ for every $V_1 \in \mc{V}(\cH_1)$.} 

Importantly, the dilations $V_1 \mapsto \varphi^{V_1}_\mu$ generally depend on contexts $V_1 \in \mc{V}(\cH_1)$. In contrast, recall that no-disturbance in Eq.~(\ref{eq: no-disturbance}) encodes non-contextuality constraints between product observables. By comparison, this suggests to extend the no-disturbance principle also to product observables with respect to the dilations $(\varphi^{V_1}_\mu)_{V_1 \in \mc{V}(\cH_1)}$. 

\begin{definition}\label{def: no-disturbance for dilations}
    We say that the measure $\mu: \mc{P}(\cH_1) \times \mc{P}(\cH_2) \ra [0,1]$ satisfies \emph{no-disturbance for dilations} if $\mu(q_1,q_2) = \mathrm{tr}_{\cH_2}[\left(v^*\varphi_\mu^{V_1}(q_1)v\right)q_2]$ for some Hilbert space $\cK$, linear map $v: \cH_2 \ra \cK$,
    and projection-valued measures $(\varphi^{V_1}_\mu)_{V_1 \in \mc{V}(\cH_1)}$, $\varphi^{V_1}_\mu: \mc{P}(V_1)\ra \mc{P}(\cK)$ such that
    \begin{equation}\label{eq: no-disturbance for dilations}
        \forall q_1 \in \mc{P}(V_1),V_1 \in \mc{V}(\cH_1), q_2' \in \mc{P}(\cK): \quad \mu'(q_1,q_2')
        := \mathrm{tr}_{\cH_2}\left[v^*\varphi^{V_1}_\mu(q_1)q_2'v\right]
    \end{equation}
    satisfies the no-disturbance principle in Eq.~(\ref{eq: no-disturbance}) for all product contexts in $\mc{V}(\cH_1) \times \mc{V}(\cK)$.
\end{definition}

From a physical point of view, we interpret the (conditional) probability distributions $\mu_{V_2}(q_1,\cdot)$ in contexts $V_2 \in \mc{V}(\cH_2)$ as states of incomplete information. We have seen that this interpretation arises naturally from the explicit dilations in Eq.~(\ref{eq: dilation in context}), which express the state as arising from coarse-graining of ancillary degrees of freedom. This view is further corroborated if one allows not only projective measurements on the respective subsystems $\mc{L}(\cH_1)$ and $\mc{L}(\cH_2)$, but arbitrary positive operator-valued measures, e.g. as discussed in \cite{BarnumEtAl2010}.

Put the other way around, if $\mu$ did not satisfy no-disturbance for dilations in Def.~\ref{def: no-disturbance for dilations}---while potentially being non-disturbing (non-signalling) when restricted to the system $\cH_1 \otimes \cH_2$---it would fail to be non-disturbing (non-signalling) \emph{for every choice of dilations} $(\varphi^{V_1}_\mu)_{V_1 \in \mc{V}(\cH_1)}$ to a larger system. Such measures 
would at the very least prompt a substantial revision of the concept of mixed states in terms of states of information in quantum theory. As such, the extension to dilated systems in Def.~\ref{def: no-disturbance for dilations} is a natural one, and arguably conservative compared with other related approaches, e.g. \cite{PawlowskiEtAl2009, vanDam2013,FritzEtAl2013,Popescu2014}.

Note also that Def.~\ref{def: no-disturbance for dilations} does not change the fact that the composite system is described by means of the Cartesian product of contexts in Eq.~(\ref{eq: product context category}), not the tensor product.\\


In order to state the implications of Def.~\ref{def: no-disturbance for dilations}, we remind the reader of some basic facts about Jordan algebras. Recall that the set of self-adjoint (Hermitian) matrices $\mc{L}(\mc{H})_\mathrm{sa}$ is closed under the \textit{anti-commutator} $\{a,b\} = ab + ba$ for all $a,b \in \mc{L}(\mc{H})_\mathrm{sa}$. This defines the (special) Jordan algebra $\mc{J}(\cH) = (\mc{L}(\mc{H})_\mathrm{sa},\{\cdot,\cdot\})$.\footnote{A Jordan algebra is called \emph{special} if it is isomorphic to the subalgebra of the self-adjoint part of an associative algebra. Otherwise, it is called \emph{exceptional}.} Moreover, we obtain a Jordan $*$-algebra by extending the operation $\{\cdot,\cdot\}$ to the complexified algebra $\mc{J}(\cH) = \mc{J}(\mc{H})_\mathrm{sa} + i\mc{J}(\mc{H})_\mathrm{sa}$. This expresses the fact that $\mc{J}(\mc{H})_\mathrm{sa}$ is the self-adjoint part of $\mc{L}(\cH)$. Finally, a Jordan $*$-homomorphism $\Phi: \mc{J}(\cH_1) \rightarrow \mc{J}(\cH_2)$ is a linear map $\Phi: \mc{L}(\cH_1) \ra \mc{L}(\cH_2)$ such that $\Phi(\{a,b\}) = \{\Phi(a),\Phi(b)\}$ and $\Phi^*(a) = \Phi(a^*)$ for all $a,b \in \mc{J}(\cH_1)$.

We have the following key result, which extracts Jordan structure from Def.~\ref{def: no-disturbance for dilations}.

\begin{theorem}\label{thm: from dilations to Jordan *-homomorphisms}
    Let $\cH_1$, $\cH_2$ be Hilbert spaces with $\mathrm{dim}(\cH_1),\mathrm{dim}(\cH_1) \geq 3$ finite, and let $\mu: \mc{P}(\cH_1) \times \mc{P}(\cH_2) \ra [0,1]$ satisfy \emph{no-disturbance for dilations} in Def.~\ref{def: no-disturbance for dilations}. 
    Then the linear functional $\sigma_\mu: \mc{L}(\cH_1 \otimes \cH_2) \ra \mathbb{C}$ in Thm.~\ref{thm: KlayRandallFoulis} is of the form
    \begin{equation*}
         \sigma_\mu(a \otimes b)
         = \mathrm{tr}_{\cH_2}\left[\left(v^*\Phi_\mu(a)v\right)b\right]
    \end{equation*}
    for some 
    Hilbert space $\cK$, linear map $v:\cH_2 \ra \cK$, and Jordan $*$-homomorphism $\Phi_\mu: \mc{J}(\cH_1) \ra \mc{J}(\cK)$.
\end{theorem}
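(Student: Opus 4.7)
The plan is to extract the Jordan $*$-homomorphism $\Phi_\mu$ from the context-wise projection-valued dilations $(\varphi^{V_1}_\mu)_{V_1 \in \mc{V}(\cH_1)}$ by showing that the no-disturbance condition on $\mu'$ glues them into a single linear map on $\mc{L}(\cH_1)$ which restricts to a $*$-homomorphism on every commutative subalgebra. I proceed in four steps.

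First, for each $V_1 \in \mc{V}(\cH_1)$ the PVM $\varphi^{V_1}_\mu: \mc{P}(V_1) \ra \mc{P}(\cK)$ extends, by the spectral theorem on the commutative algebra $V_1$, to a unital $*$-homomorphism $\Phi^{V_1}_\mu: V_1 \ra \mc{L}(\cK)$ with commutative image. Applying Thm.~\ref{thm: KlayRandallFoulis} to the non-signalling measure $\mu'$ on $\mc{P}(\cH_1) \times \mc{P}(\cK)$ then yields a POPT state $\sigma_{\mu'}: \mc{L}(\cH_1 \otimes \cK) \ra \mathbb{C}$ and, equivalently, a positive linear map $\phi_{\mu'}: \mc{L}(\cH_1) \ra \mc{L}(\cK)$ with $\mu'(q_1, q_2') = \mathrm{tr}_{\cK}[\phi_{\mu'}(q_1) q_2']$. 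Matching the two expressions for $\mu'$ on $\mc{P}(V_1) \times \mc{P}(\cK)$ identifies $\phi_{\mu'}|_{V_1}$ with $\Phi^{V_1}_\mu$, after fixing the residual gauge by choosing a minimal Naimark dilation in each context.

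Second, the no-disturbance principle on $\mc{V}(\cH_1) \times \mc{V}(\cK)$ forces the contextual $*$-homomorphisms to be mutually consistent: for $\tilde V_1 \subseteq V_1 \cap V_1'$ and $q_1 \in \mc{P}(\tilde V_1)$, $\Phi^{V_1}_\mu(q_1) = \Phi^{V_1'}_\mu(q_1)$. Setting $\Phi_\mu(q_1) := \Phi^{V_1}_\mu(q_1)$ for any $V_1 \ni q_1$ then gives a well-defined assignment on projections which extends to a linear map $\Phi_\mu: \mc{L}(\cH_1) \ra \mc{L}(\cK)$ by an operator-valued Gleason-type theorem (applicable since $\mathrm{dim}(\cH_1) \geq 3$; cf.\ \cite{Doering2008,FrembsDoering2022b}). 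Because $\Phi_\mu$ restricts to a $*$-homomorphism on every commutative subalgebra, $\Phi_\mu(a^2) = \Phi_\mu(a)^2$ for all self-adjoint $a \in \mc{J}(\cH_1)$; polarisation then gives $\Phi_\mu(\{a,b\}) = \{\Phi_\mu(a), \Phi_\mu(b)\}$, so that $\Phi_\mu: \mc{J}(\cH_1) \ra \mc{J}(\cK)$ is a Jordan $*$-homomorphism.

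Finally, for any product observable $a \otimes b$ with spectral decomposition $a = \sum_j A_j q_1^j$ in a context $V_1$, linearity of the trace yields
\begin{equation*}
    \sigma_\mu(a \otimes b)
    = \sum_j A_j\, \mu(q_1^j, b)
    = \sum_j A_j\, \mathrm{tr}_{\cH_2}\!\left[\left(v^*\varphi^{V_1}_\mu(q_1^j)v\right) b\right]
    = \mathrm{tr}_{\cH_2}\!\left[\left(v^*\Phi_\mu(a)v\right) b\right]\; ,
\end{equation*}
and bilinearity extends this to all of $\mc{L}(\cH_1 \otimes \cH_2)$. The principal technical obstacle lies in the second step: the Naimark dilations $\varphi^{V_1}_\mu$ are only determined up to an ancillary gauge, so translating no-disturbance---which \emph{a priori} only equates traces against $q_2' \in \mc{P}(\cK)$---into an operator identity $\Phi^{V_1}_\mu(q_1) = \Phi^{V_1'}_\mu(q_1)$ requires selecting a canonical representative in each context and checking mutual compatibility across the entire partial order $\mc{V}(\cH_1)$; the subsequent operator-valued Gleason extension and the Kadison-type characterisation of Jordan $*$-homomorphisms are then comparatively standard.
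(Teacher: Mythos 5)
Your proposal follows essentially the same route as the paper's proof: use no-disturbance for dilations to force the contextual PVMs $\varphi^{V_1}_\mu$ to agree on overlapping contexts, so that they glue into a single projection-lattice map (orthomorphism) $\varphi_\mu:\mc{P}(\cH_1)\ra\mc{P}(\cK)$, and then lift this to a Jordan $*$-homomorphism $\Phi_\mu:\mc{J}(\cH_1)\ra\mc{J}(\cK)$, with the product formula for $\sigma_\mu$ following by spectral decomposition and linearity. The differences are packaging rather than substance: the paper invokes Bunce--Wright (Cor.~1 of \cite{BunceWright1993}) as a single black box where you re-derive the lift via an operator-valued Gleason-type extension plus multiplicativity on commutative subalgebras and polarisation, and the ``gauge-fixing/minimal Naimark'' obstacle you flag is dispatched in the paper simply by reading Def.~\ref{def: no-disturbance for dilations} as fixing one family $(\varphi^{V_1}_\mu)_{V_1}$ whose extension $\mu'$ is quantified over \emph{all} $q_2'\in\mc{P}(\cK)$, so that context-independence of the dilations (up to the support of $vv^*$) is read off directly, without applying Thm.~\ref{thm: KlayRandallFoulis} to $\mu'$ or selecting canonical representatives.
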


\begin{proof}
    Since $\mu$ satisfies no-disturbance for dilations in Def.~\ref{def: no-disturbance for dilations}, there exists a Hilbert space $\cK$, a linear map $v: \cH_2 \ra \cK$, and projection-valued measures $(\varphi^{V_1}_\mu)_{V_1 \in \mc{V}(\cH_1)}$ such that $\mu(q_1,q_2) = \mathrm{tr}_{\cH_2}\left[\left(v^*\varphi_\mu^{V_1}(q_1)v\right)q_2\right]$ for all $q_1 \in \mc{P}(V_1)$, $V_1 \in \mc{V}(\cH_1)$ and $q_2 \in \mc{P}(\cH_2)$. Moreover, the extension $\mu'(q_1,q_2') = \mathrm{tr}_{\cH_2}\left[v^*\varphi^{V_1}_\mu(q_1)q_2'v\right]$ in Eq.~(\ref{eq: no-disturbance for dilations}), where $q_1 \in \mc{P}(V_1)$, $V_1 \in \mc{V}(\cH_1)$, and $q_2' \in \mc{P}(\cK)$, satisfies the no-disturbance principle in Eq.~(\ref{eq: no-disturbance}). The latter are equivalent to the non-contextuality constraints in Eq.~(\ref{eq: non-contextuality}), restricted to product contexts $\mc{V}(\cH_1) \times \mc{V}(\cK)$. Consequently, $(\varphi^{V_1}_\mu)_{V_1 \in \mc{V}(\cH_1)}$ does not depend on contexts, and therefore defines a map $\varphi_\mu: \mc{P}(\cH_1) \ra \mc{P}(\cK)$. It follows that $\varphi_\mu$ is an orthomorphism, i.e., (i) $\varphi_\mu(0) = 0$, (ii) $\varphi_\mu(1-p) = 1-\varphi_\mu(p)$, (iii) $\varphi_\mu(p)\varphi_\mu(q)=0$ and (iv) $\varphi_\mu(p+q) = \varphi_\mu(p)+\varphi_\mu(q)$ all hold whenever $p,q \in \mc{P}(\cH_1)$, $pq=0$. This follows since by definition $\varphi^{V_1}_\mu: \mc{P}(\cH_1) \ra \mc{P}(\cK)$ is a projection-valued measure in every context $V_1 \in \mc{V}(\cH_1)$. In particular, note that conditions (i)-(iii) hold whenever $p,q \in \mc{P}(\cH_1)$, $pq=0$, which implies $p,q \in V_1$ for some $V_1 \in \mc{V}(\cH_1)$.
    Finally, by a result due to Bunce and Wright (Cor.~1, \cite{BunceWright1993}) every orthomorphism $\varphi: \mc{P}(\cH_1) \ra \mc{P}(\cK)$ lifts to a Jordan $*$-homomorphism $\Phi: \mc{J}(\cH_1) \ra \mc{J}(\cK)$ as desired.
\end{proof}

Of course, every quantum state has a purification which yields a dilation of the form in Thm.~\ref{thm: from dilations to Jordan *-homomorphisms}. 
Our argument works in the reverse direction: by requiring that measures $\mu$ have a non-contextual extension, i.e., that they satisfy the no-disturbance principle for at least one choice of dilations $V_1 \mapsto \varphi^{V_1}_\mu$, $\mu$ is of the form in Thm.~\ref{thm: from dilations to Jordan *-homomorphisms}. Next, we show that, in this case, $\mu$ already defines a quantum state up to a choice of time orientation in local subsystems.

\section{Unitary evolution and time-oriented states}\label{sec: local unitary evolution}

Extending no-disturbance to dilations via Def.~\ref{def: no-disturbance for dilations} is not quite sufficient to ensure that a POPT state $\sigma_\mu = \mathrm{tr}[\rho_\mu\ \cdot]$ becomes positive and thus a quantum state---but it almost is. The difference is best expressed in terms of the Choi-Jamio\l kowski isomorphism \cite{Jamiolkowski1972}.
In particular, by Choi's theorem the linear operator $\rho_\mu$ is positive if and only if a related map $\phi_\mu$ is completely positive \cite{Choi1975}.
Moreover, by Stinespring's theorem a map is completely positive if and only if it corresponds with an 
algebra homomorphism (a representation) on a larger system \cite{Stinespring1955}, i.e., $\phi_\mu: \mc{L}(\cH_1) \ra \mc{L}(\cH_2)$ is completely positive if and only if there exists a Hilbert space $\cK$, a linear map $v: \cH_2 \ra \cK$, and an algebra homomorphism $\Phi_\mu: \mc{L}(\cH_1) \ra \mc{L}(\cK)$ such that $\phi_\mu(a) = v^*\Phi_\mu(a)v$. 
By Thm.~\ref{thm: from dilations to Jordan *-homomorphisms}, if $\mu$ satisfies no-disturbance for dilations, it is of this form with the only difference that $\Phi_\mu$ is a merely Jordan $*$-homomorphism.

The Jordan algebra $\mc{J}(\mc{H})$ does not completely determine the 
algebra $\mc{L}(\mc{H})$, since it lacks the antisymmetric part or \textit{commutator} $[a,b] = ab - ba$ in the associative (and noncommutative for $\mathrm{dim}(\cH) \geq 2$) product $ab = \frac{1}{2}(ab + ba) + \frac{1}{2}(ab - ba)$ for all $a,b \in \mc{L}(\mc{H})$. In particular, the Jordan $*$-homomorphism $\Phi_\mu$ is an 
algebra homomorphism if and only if it also preserves commutators, i.e., $\Phi([a,b]) = [\Phi(a),\Phi(b)]$ for all $a,b \in \mc{J}(\cH)$. This property has a distinctive physical meaning in terms the unitary evolution in local subsystems.

To see this, we describe the dynamics of a system represented by the observable algebra $\mc{L}(\cH)_\mathrm{sa}$ in terms of one-parameter groups of automorphisms $\mathbb{R} \mapsto \mathrm{Aut}(\mc{L}(\cH)_\mathrm{sa})$.  Note that $\mc{L}(\cH)$ possesses a canonical action on itself by conjugation $\Psi: \mathbb{R} \times \mc{L}(\cH)_{\mathrm{sa}} \ra \mathrm{Aut}(\mc{L}(\cH)_\mathrm{sa})$, $\Psi(t,a)b = e^{ita}be^{-ita}$ for all $a,b \in \mc{L}(\cH)_\mathrm{sa}$.\footnote{The map $\Psi$ `selects' unitary (as opposed to anti-unitary \cite{Wigner_GruppentheorieUndQM,Bargmann1964}) symmetries on $\mc{J}(\cH)_\mathrm{sa}$. Inherent in this is an identification of self-adjoint elements in $\mc{J}(\cH)_\mathrm{sa}$ and generators of symmetries, called a \emph{dynamical correspondence}. In other words, $\Psi$ `selects' a canonical dynamical correspondence on $\mc{J}(\cH)_\mathrm{sa}$: $\Psi = \exp \circ \psi$ is the exponential of the (canonical) dynamical correspondence $\psi$ on $\mc{L}(\cH)$. For more details, see \cite{AlfsenShultz1998,Doering2014}.} The latter expresses the unitary evolution in the system, thus promoting the parameter $t$ to a time parameter with $a$ playing the role of a Hamiltonian. Note however that without fixing a preferred Hamiltonian the value of $t$ has a priori no objective physical meaning, it is intrinsically relational. Nevertheless, the sign of $t$ turns out to be independent of (the choice of Hamiltonian) $a \in \mc{L}(\cH)_\mathrm{sa}$ \cite{AlfsenShultz1998}.

By differentiation, $\left.\frac{d}{dt}\right\vert_{t=0}\Psi(t,a) = i[a,\cdot]$, where $[\cdot,\cdot]$ is the commutator in the ambient algebra $\mc{L}(\cH)$. It follows that $\Phi_\mu$ preserves commutators if and only if it preserves the canonical unitary evolution $\Psi$ of the subsystem algebras. We call $\Psi$ 
the \emph{canonical time orientation} on $\mc{L}(\cH)$ \cite{AlfsenShultz1998,Doering2014,Frembs2022a,Frembs2022b}, and $\Psi^*(t,a) = * \circ \Psi(t,a) = \Psi(-t,a)$ (cf. Prop.~15 in \cite{AlfsenShultz1998a}) the \emph{reverse time orientation}, which corresponds with the opposite order of composition in $\mc{L}(\cH)$, equivalently the opposite sign for the commutator in the respective algebras,
\begin{equation*}
    \mc{L}_\pm(\mc{H}) := (\mc{J}(\cH),\Psi) = (\mc{J}(\mc{H}), \cdot_\pm)\; ,\footnote{These are the only associative products which reduce to $\mc{J}(\mc{H})$ on the symmetric part \cite{AlfsenShultz1998,Kadison1951}.}
\end{equation*}
where we set $a\cdot_\pm b := \frac{1}{2}\{a,b\} \pm \frac{1}{2}[a,b]$ for all $a,b \in \mc{J}(\cH)$. Of course, $\mc{L}(\cH)_+ = \mc{L}(\cH)$.

\begin{definition}\label{def: time-oriented states}
    We say that a measure $\mu: \mc{P}(\cH_1) \times \mc{P}(\cH_2) \ra [0,1]$, which satisfies no-disturbance for dilations in Def.~\ref{def: no-disturbance for dilations}, is \emph{time-oriented with respect to $\mc{L}_-(\cH_1)$ and $\mc{L}(\cH_2)$} if
    \begin{equation}\label{eq: time-oriented states}
        \forall t \in \mathbb{R}, a \in \mc{J}(\mc{H}_1)_\mathrm{sa}:\ \Phi_\mu \circ \Psi^*_1(t,a)
        = \Phi_\mu \circ * \circ  \Psi_1(t,a)
        = \Psi'_2(t,\Phi_\mu(a)) \circ \Phi_\mu\; ,
    \end{equation}
    where $\Phi_\mu$ is the Jordan $*$-homomorphism in Thm.~\ref{thm: from dilations to Jordan *-homomorphisms}.
\end{definition}

Two remarks on Def.~\ref{def: time-oriented states} are in order. First, note that Eq.~(\ref{eq: time-oriented states}) requires consistency with respect to the commutators on $\mc{L}(\cH_1)$ and $\mc{L}(\cK)$. However, since $\mc{L}(\cH_2)$ arises from $\mc{L}(\cK)$ by restriction, the time orientation $\Psi_2$ 
on $\mc{L}(\cH_2)$ extends to a unique time orientation $\Psi'_2$ 
on $\mc{L}(\cK)$. Second, we have used the reverse time orientation $\Psi^*_1$ on the first system. This choice of relative time orientation $(\Psi_1^*,\Psi_2)$ is intrinsic to Choi's theorem \cite{Choi1975} (see also \cite{Frembs2022b}), which allows us to deduce positivity of $\sigma_\mu$ from complete positivity of $\phi_\mu$ in Thm.~\ref{thm: from dilations to Jordan *-homomorphisms}.

In fact, the conjunction of Def.~\ref{def: no-disturbance for dilations} and Def.~\ref{def: time-oriented states} yields our main result.

\begin{theorem}\label{thm: main result}
    Let $\cH_1$, $\cH_2$ be Hilbert spaces with $\mathrm{dim}(\cH_1),\mathrm{dim}(\cH_2) \geq 3$ finite, and let $\mu: \mc{P}(\cH_1) \times \mc{P}(\cH_2) \ra [0,1]$ be time-oriented as by Def.~\ref{def: time-oriented states} (and thus satisfy no-disturbance for dilations). Then $\mu$ uniquely extends to a quantum state $\sigma_\mu \in \mc{S}(\cH_1 \otimes \cH_2)$.
\end{theorem}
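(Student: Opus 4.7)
The plan is to promote the Jordan $*$-homomorphism $\Phi_\mu$ from Thm.~\ref{thm: from dilations to Jordan *-homomorphisms} to a $*$-homomorphism of associative algebras on the opposite-product version of the first subsystem, apply Stinespring's theorem to extract complete positivity, and then invoke Choi's theorem with the appropriate sign convention to conclude positivity of $\rho_\mu$.

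By Thm.~\ref{thm: from dilations to Jordan *-homomorphisms}, $\sigma_\mu(a \otimes b) = \mathrm{tr}_{\cH_2}[(v^*\Phi_\mu(a)v)b]$ for a Jordan $*$-homomorphism $\Phi_\mu: \mc{J}(\cH_1) \ra \mc{J}(\cK)$, so $\Phi_\mu$ already preserves anticommutators. Differentiating the time-orientation identity (\ref{eq: time-oriented states}) at $t=0$ yields, for all $a \in \mc{J}(\cH_1)_\mathrm{sa}$ and $b \in \mc{L}(\cH_1)$,
\begin{equation*}
    -i\,\Phi_\mu([a,b]) = i\,[\Phi_\mu(a),\Phi_\mu(b)]\; ,
\end{equation*}
where the minus sign on the left originates from $\Psi_1^*(t,a) = \Psi_1(-t,a)$. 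Extending by complex linearity in both arguments gives $\Phi_\mu([a,b]) = -[\Phi_\mu(a),\Phi_\mu(b)]$ on all of $\mc{J}(\cH_1)$. Combined with Jordan preservation $\Phi_\mu(\{a,b\}) = \{\Phi_\mu(a),\Phi_\mu(b)\}$, this forces $\Phi_\mu(ab) = \Phi_\mu(b)\Phi_\mu(a)$, i.e., $\Phi_\mu$ is a $*$-antihomomorphism $\mc{L}(\cH_1) \ra \mc{L}(\cK)$, equivalently a $*$-representation of $\mc{L}_-(\cH_1) = \mc{L}(\cH_1)^{op}$ on $\cK$.

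Stinespring's theorem, applied to this $*$-representation together with the bounded linear map $v: \cH_2 \ra \cK$, then declares $\phi_\mu: \mc{L}_-(\cH_1) \ra \mc{L}(\cH_2)$, $a \mapsto v^*\Phi_\mu(a)v$, completely positive. The opposite-algebra structure on the first factor is precisely the transpose convention built into Choi's theorem, as noted in Sec.~\ref{sec: local unitary evolution}, so complete positivity of $\phi_\mu$ as a map out of $\mc{L}_-(\cH_1)$ corresponds via the Choi--Jamio\l kowski isomorphism to positivity of the operator $\rho_\mu \in \mc{L}(\cH_1\otimes\cH_2)$ defined by $\sigma_\mu = \mathrm{tr}[\rho_\mu\,\cdot\,]$. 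Normalisation $\mathrm{tr}[\rho_\mu] = \mu(\mathbbm{1}_1,\mathbbm{1}_2) = 1$ identifies $\rho_\mu$ as a density operator, so $\sigma_\mu$ is a bona fide quantum state, and uniqueness of the extension is inherited from Thm.~\ref{thm: KlayRandallFoulis}.

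The delicate step is keeping the sign conventions aligned. The reverse time orientation $\Psi_1^*$ in Def.~\ref{def: time-oriented states} is chosen precisely so that differentiation produces commutator \emph{reversal} rather than preservation; this makes $\Phi_\mu$ a $*$-representation of $\mc{L}_-(\cH_1)$ rather than of $\mc{L}(\cH_1)$, which is in turn the input form required so that the Choi--Jamio\l kowski isomorphism delivers positivity of $\rho_\mu$ itself rather than of its partial transpose. Verifying that this chain of conventions is self-consistent---and is exactly what singles out quantum states among the strictly larger class of POPT states satisfying no-disturbance for dilations without a compatible choice of time orientation---is the conceptual crux of the argument.
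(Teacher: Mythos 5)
Your proposal is correct and follows essentially the same route as the paper's own proof: time-orientation turns the Jordan $*$-homomorphism $\Phi_\mu$ into an (anti-)homomorphism, i.e., a homomorphism out of $\mc{L}_-(\cH_1)$, then Stinespring gives complete positivity of $\phi_\mu$ and Choi--Jamio\l kowski (with the reversed orientation matching the transpose convention) gives positivity of $\rho_\mu$, with uniqueness from Thm.~\ref{thm: KlayRandallFoulis}. The only difference is presentational: you spell out the $t=0$ differentiation of Eq.~(\ref{eq: time-oriented states}) explicitly, which the paper handles in the discussion preceding Def.~\ref{def: time-oriented states}.
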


\begin{proof}
    Since $\mu$ satisfies no-disturbance for dilations in Def.~\ref{def: no-disturbance for dilations}, the dilations $(\varphi^{V_1}_\mu)_{V_1 \in \mc{V}(\cH_1)}$ uniquely extend to a Jordan $*$-homomorphism $\Phi_\mu$ by Thm.~\ref{thm: from dilations to Jordan *-homomorphisms}. In particular, $\Phi_\mu$ preserves anti-commutators. Moreover, since $\mu$ is time-oriented (see Def.~\ref{def: time-oriented states}), $\Phi_\mu$ also preserves commutators between the algebras $\mc{L}_-(\cH_1)$ and $\mc{L}(\cH_2)$, hence, is an 
    algebra homomorphism. By Stinespring's theorem, this implies that the linear map $\phi_\mu: \mc{L}_-(\cH_1) \ra \mc{L}(\cH_2)$ is completely positive. Finally, we define a linear operator $\rho_\mu \in \mc{L}(\cH_1 \otimes \cH_2)$ by the relation $\phi_\mu(a) = \mathrm{tr}_{\cH_1}[\rho_\mu(a \otimes \mathbbm{1}_2)]$ for all $a \in \mc{L}(\cH_1)$. Note that $\phi_\mu \circ *$ is the inverse of the Choi-Jamio\l kowski isomorphism \cite{Jamiolkowski1972,Choi1975}, defined on the computational basis $\{|i\rangle\}_i$ in $\cH_1$ by
    \begin{equation}\label{eq: CJI}
        \rho_\mu
        = \sum_{ij} |i\rangle\langle j| \otimes \phi_\mu(|j\rangle\langle i|)
        = \sum_{ij} |i\rangle\langle j| \otimes (\phi_\mu \circ *)(|i\rangle\langle j|)\; ,
    \end{equation}
    where $*$ denotes the Hermitian adjoint (see also \cite{Frembs2022b}). Since $\phi_\mu: \mc{L}_-(\cH_1) \ra \mc{L}(\cH_2)$ is completely positive and since $*$ reverses time orientations by Eq.~(\ref{eq: time-oriented states}), it follows from Choi's theorem \cite{Choi1975} that $\rho_\mu \in \mc{L}(\cH_1\otimes \cH_2)$ is positive. Hence, $\sigma_\mu = \mathrm{tr}[\rho_\mu\ \cdot]$ defines a quantum state. By construction, $\sigma_\mu$ is the unique linear extension of $\mu$, i.e., $\mu = \sigma_\mu|_{\mc{P}(\cH_1) \times \mc{P}(\cH_2)}$.
    
    Conversely, it follows from the isomorphism in Eq.~(\ref{eq: CJI}), Choi's theorem \cite{Choi1975} and Def.~\ref{def: time-oriented states} that every quantum state restricts to a time-oriented measure $\mu: \mc{P}(\cH_1) \times \mc{P}(\cH_2) \ra [0,1]$.
\end{proof}

We finish this section with a few remarks.
We defined the canonical time orientation $\Psi$ on $\mc{L}(\cH) = (\mc{J}(\cH),\Psi)$ with respect to unitary symmetries. In turn, anti-unitary symmetries correspond with unitary symmetries on the algebra $\mc{L}_-(\cH)$ (and vice versa). Recalling that every anti-unitary operator is the product of a unitary and the time reversal operator further corroborates the notion of time-orientedness in Def.~\ref{def: time-oriented states}.

Moreover, note that Eq.~(\ref{eq: time-oriented states}) is invariant under changing time orientations, equivalently under time reversal in both subsystems (cf. \cite{Frembs2022b}).
On the other hand, it follows from Thm.~\ref{thm: main result} that applying time reversal to one subsystem only will generally map outside the set of bipartite quantum states. This is reminiscent of the positive partial transpose (PPT) criterion for bipartite entanglement due to Peres \cite{Peres1996,Horodeckisz1996}. In fact, following this analogy the criterion can be given a sharp physical interpretation in terms of time orientations \cite{Frembs2022a}.

Finally, note that Thm.~\ref{thm: main result} represents a generalisation of Gleason's theorem to composite systems. We extend this perspective to the general setting of von Neumann algebras in \cite{FrembsDoering2022b}.

\section{Conclusion and Outlook}\label{sec: Conclusion and outlook}

We studied the physical principle of no-disturbance on joint probability distributions over product observables, which
reduces to no-signalling as shown in Eq.~(\ref{eq: no-signalling from no-disturbance}). As such we identified no-disturbance as the key principle in \cite{BarnumEtAl2010},
which shows that correlations over product observables satisfying no-disturbance cannot exceed bipartite quantum correlations. However, no-disturbance is not sufficient to restrict to quantum states. Our main result, Thm.~\ref{thm: main result}, resolves this issue: by extending the scope of no-disturbance to dilations in Def.~\ref{def: no-disturbance for dilations}, and by enforcing a consistency condition with respect to unitary evolution in local subsystems in Def.~\ref{def: time-oriented states}, we show that the resulting joint probability distributions correspond with bipartite quantum states unambiguously.

Our research naturally relates to other approaches, which seek to single out quantum correlations from more general non-signalling distributions, by imposing additional principles, e.g. \cite{PawlowskiEtAl2009, vanDam2013,FritzEtAl2013,Popescu2014}. 
Apart from the extension of the no-disturbance principle to dilated systems, which fits nicely with the usual interpretation of mixed states as states of incomplete information as argued above, our only additional assumption already has a clear physical meaning in terms of the arrow of time \cite{Frembs2022a}.\\ 

\paragraph*{Acknowledgements.} This work is supported through a studentship in the Centre for Doctoral Training on Controlled Quantum Dynamics at Imperial College funded by the EPSRC, by grant number FQXi-RFP-1807 from the Foundational Questions Institute and Fetzer Franklin Fund, a donor advised fund of Silicon Valley Community Foundation, and ARC Future Fellowship FT180100317.

\clearpage
\bibliographystyle{siam}
\bibliography{bibliography}

\begin{thebibliography}{10}

\bibitem{AbramskyBarbosa2021}
{\sc S.~Abramsky and R.~S. Barbosa}, {\em {The logic of contextuality}}, in
  29th EACSL Annual Conference on Computer Science Logic (CSL 2021), C.~Baier
  and J.~Goubault-Larrecq, eds., vol.~183 of Leibniz International Proceedings
  in Informatics (LIPIcs), Dagstuhl, Germany, 2021, Schloss
  Dagstuhl--Leibniz-Zentrum f{\"u}r Informatik, pp.~5:1--5:18.

\bibitem{AbramskyBrandenburger2011}
{\sc S.~Abramsky and A.~Brandenburger}, {\em The sheaf-theoretic structure of
  non-locality and contextuality}, New J. Phys., 13 (2011), p.~113036.

\bibitem{AcinEtAl2010}
{\sc A.~Ac\'{\i}n et~al.}, {\em Unified framework for correlations in terms of
  local quantum observables}, Phys. Rev. Lett., 104 (2010), p.~140404.

\bibitem{AcinGisinMasanes2006}
{\sc A.~Ac\'{\i}n, N.~Gisin, and L.~Masanes}, {\em From {B}ell's theorem to
  secure quantum key distribution}, Phys. Rev. Lett., 97 (2006), p.~120405.

\bibitem{AlfsenShultz1998a}
{\sc E.~M. {Alfsen} and F.~W. {Shultz}}, {\em {On orientation and dynamics in
  operator algebras. Part I}}, Communications in Mathematical Physics, 194
  (1998), pp.~87--108.

\bibitem{AlfsenShultz1998}
{\sc E.~M. {Alfsen} and F.~W. {Shultz}}, {\em {Orientation in operator
  algebras}}, Proc. Natl. Acad. Sci. U.S.A., 95 (1998), pp.~6596--6601.

\bibitem{google2020}
{\sc F.~Arute et~al.}, {\em Quantum supremacy using a programmable
  superconducting processor}, Nature (London), 574 (2019).

\bibitem{AspectEtAl1981}
{\sc A.~Aspect et~al.}, {\em Experimental tests of realistic local theories via
  {B}ell's theorem}, Phys. Rev. Lett., 47 (1981), pp.~460--463.

\bibitem{Bargmann1964}
{\sc V.~{Bargmann}}, {\em {Note on Wigner's theorem on symmetry operations}},
  Journal of Mathematical Physics, 5 (1964), pp.~862--868.

\bibitem{BarnumEtAl2010}
{\sc H.~{Barnum}, S.~{Beigi}, S.~{Boixo}, M.~B. {Elliott}, and S.~{Wehner}},
  {\em {Local quantum measurement and no-signaling imply quantum
  correlations}}, Phys. Rev. Lett., 104 (2010).

\bibitem{Bell1976}
{\sc J.~S. Bell}, {\em {The theory of local beables}}, in Speakable and
  unspeakable in quantum mechanics, J.~S. Bell, ed., Cambridge University
  Press, 1987 1976, pp.~52--62.

\bibitem{BravyiGossetKoenig2020}
{\sc S.~Bravyi, D.~Gosset, R.~K{\"o}nig, and M.~Tomamichel}, {\em Quantum
  advantage with noisy shallow circuits}, Nature Physics, 16 (2020),
  pp.~1040--1045.

\bibitem{CabelloSeveriniWinter2010}
{\sc A.~Cabello, S.~Severini, and A.~Winter}, {\em ({N}on-)contextuality of
  physical theories as an axiom}, 2010.

\bibitem{Choi1975}
{\sc M.~D. Choi}, {\em Completely positive linear maps on complex matrices},
  Linear Algebra Its Appl., 10 (1975), pp.~285 -- 290.

\bibitem{deVicente2014}
{\sc J.~I. de~Vicente}, {\em On nonlocality as a resource theory and
  nonlocality measures}, Journal of Physics A: Mathematical and Theoretical, 47
  (2014), p.~424017.

\bibitem{Doering2008}
{\sc A.~D\"oring}, {\em Quantum states and measures on the spectral presheaf},
  Adv. Sci. Lett., 2 (2008).

\bibitem{Doering2014}
{\sc A.~D{\"o}ring}, {\em {Two new complete invariants of von Neumann
  algebras}}, ArXiv e-prints,  (2014).

\bibitem{FreDoe19a}
{\sc A.~D{\"o}ring and M.~Frembs}, {\em Contextuality and the fundamental
  theorems of quantum mechanics}, ArXiv e-prints,  (2019).

\bibitem{DoeringIsham2011}
{\sc A.~D{\"o}ring and C.~Isham}, {\em ``What is a Thing?'': Topos Theory in
  the Foundations of Physics}, Springer Berlin Heidelberg, Berlin, Heidelberg,
  2011, pp.~753--937.

\bibitem{Frembs2022a}
{\sc M.~Frembs}, {\em Entanglement and the arrow of time}.
\newblock (\textit{forthcoming}), (2022).

\bibitem{Frembs2022b}
\leavevmode\vrule height 2pt depth -1.6pt width 23pt, {\em {Variations on the
  Choi-Jamio\l kowski isomorphism}}.
\newblock (\textit{forthcoming}), (2022).

\bibitem{FrembsDoering2019b}
{\sc M.~Frembs and A.~D\"oring}, {\em No-signalling, contextuality and the
  arrow of time},  (2019).

\bibitem{FrembsDoering2022b}
{\sc M.~Frembs and A.~D{\"o}ring}, {\em Gleason’s theorem for composite
  systems}.
\newblock (\textit{forthcoming}), (2022).

\bibitem{FritzEtAl2013}
{\sc T.~Fritz et~al.}, {\em Local orthogonality as a multipartite principle for
  quantum correlations}, Nature Communications, 4 (2013).

\bibitem{ZeilingerEtAl2015}
{\sc M.~Giustina et~al.}, {\em Significant-loophole-free test of {B}ell's
  theorem with entangled photons}, Phys. Rev. Lett., 115 (2015), p.~250401.

\bibitem{Gleason1975}
{\sc A.~Gleason}, {\em Measures on the closed subspaces of a {H}ilbert space},
  J. Math. Mech., 6 (1957), pp.~885--893.

\bibitem{Horodeckisz1996}
{\sc M.~{Horodecki}, P.~{Horodecki}, and R.~{Horodecki}}, {\em {Separability of
  mixed states: necessary and sufficient conditions}}, Phys. Lett. A, 223
  (1996), pp.~1--8.

\bibitem{Howard2014}
{\sc M.~{Howard}, J.~{Wallman}, V.~{Veitch}, and J.~{Emerson}}, {\em
  {Contextuality supplies the `magic' for quantum computation}}, Nature, 510
  (2014), pp.~351--355.

\bibitem{IshamButterfieldI}
{\sc C.~J. Isham and J.~Butterfield}, {\em {A topos perspective on the
  Kochen-Specker theorem: I. Quantum states as generalized valuations}}, Int.
  J. Theor. Phys., 37 (1998), pp.~2669--2733.

\bibitem{BunceWright1993}
{\sc L.~J.~Bunce and J.~D. Maitland~Wright}, {\em {On Dye's theorem for Jordan
  operator algebras}}, Expo. Math, 11 (1993), pp.~91--95.

\bibitem{JanottaHinrichsen2014}
{\sc P.~Janotta and H.~Hinrichsen}, {\em Generalized probability theories: what
  determines the structure of quantum theory?}, J. Phys. A, 47 (2014),
  p.~323001.

\bibitem{Kadison1951}
{\sc R.~V. Kadison}, {\em Isometries of operator algebras}, Ann. Math., 54
  (1951), pp.~325--338.

\bibitem{KlayRandallFoulis1987}
{\sc M.~{Kl{\"a}y}, C.~{Randall}, and D.~{Foulis}}, {\em {Tensor products and
  probability weights}}, Int. J. Theor. Phys., 26 (1987), pp.~199--219.

\bibitem{KochenSpecker1967}
{\sc S.~Kochen and E.~P. Specker}, {\em The problem of hidden variables in
  quantum mechanics}, J. Math. Mech., 17 (1967), pp.~59--87.

\bibitem{Jamiolkowski1972}
{\sc A.~J. kowski}, {\em Linear transformations which preserve trace and
  positive semidefiniteness of operators}, Rep. Math. Phys., 3 (1972), pp.~275
  -- 278.

\bibitem{Masse2021}
{\sc G.~Masse}, {\em {From the problem of future contingents to Peres-Mermin
  square experiments: An introductory review to contextuality}}, 2021.

\bibitem{Naimark1943}
{\sc M.~A. {Naimark}}, {\em {On a representation of additive operator set
  functions.}}, {C. R. (Dokl.) Acad. Sci. URSS, n. Ser.}, 41 (1943),
  pp.~359--361.

\bibitem{PawlowskiBrunner2011}
{\sc M.~Paw\l{}owski and N.~Brunner}, {\em Semi-device-independent security of
  one-way quantum key distribution}, Phys. Rev. A, 84 (2011), p.~010302.

\bibitem{PawlowskiEtAl2009}
{\sc M.~{Paw{\l}owski} et~al.}, {\em {Information causality as a physical
  principle}}, \nat, 461 (2009), pp.~1101--1104.

\bibitem{Peres1996}
{\sc A.~Peres}, {\em Separability criterion for density matrices}, Phys. Rev.
  Lett., 77 (1996), pp.~1413--1415.

\bibitem{Popescu2014}
{\sc S.~Popescu}, {\em {Nonlocality beyond quantum mechanics}}, Nature Phys.,
  10 (2014), pp.~264--270.

\bibitem{PopescuRohrlich1994}
{\sc S.~Popescu and D.~Rohrlich}, {\em Quantum nonlocality as an axiom}, Found.
  Phys., 24 (1994), pp.~379--385.

\bibitem{RamanathanEtAl2012}
{\sc R.~Ramanathan, A.~Soeda, P.~Kurzy\ifmmode~\acute{n}\else \'{n}\fi{}ski,
  and D.~Kaszlikowski}, {\em Generalized monogamy of contextual inequalities
  from the no-disturbance principle}, Phys. Rev. Lett., 109 (2012), p.~050404.

\bibitem{Raussendorf2016}
{\sc R.~Raussendorf}, {\em Cohomological framework for contextual quantum
  computations},  (2016).

\bibitem{RaussendorfBriegel2001}
{\sc R.~Raussendorf and H.~J. Briegel}, {\em A one-way quantum computer}, Phys.
  Rev. Lett., 86 (2001), pp.~5188--5191.

\bibitem{ShalmEtAl2015}
{\sc L.~K. Shalm et~al.}, {\em Strong loophole-free test of local realism},
  Phys. Rev. Lett., 115 (2015), p.~250402.

\bibitem{Shor1994}
{\sc P.~W. Shor}, {\em Algorithms for quantum computation: discrete logarithms
  and factoring}, in Proceedings 35th Annual Symposium on Foundations of
  Computer Science, 1994, pp.~124--134.

\bibitem{Specker1960}
{\sc E.~Specker}, {\em {Die Logik nicht gleichzeitig entscheidbarer Aussagen}},
  Dialectica, 14 (1960), pp.~239--246.

\bibitem{Stinespring1955}
{\sc W.~F. Stinespring}, {\em {Positive functions on $C^*$-algebras}}, Proc.
  Am. Math. Soc., 6 (1955), pp.~211--216.

\bibitem{vanDam2013}
{\sc W.~van Dam}, {\em Implausible consequences of superstrong nonlocality},
  Natural Computing, 12 (2013), pp.~9--12.

\bibitem{Wallach2002}
{\sc N.~R. {Wallach}}, {\em An unentangled Gleason's theorem}, vol.~305, Amer.
  Math. Soc., 2002, p.~291–298.

\bibitem{Wigner_GruppentheorieUndQM}
{\sc E.~P. Wigner}, {\em Gruppentheorie und ihre {A}nwendung auf die
  {Q}uantenmechanik der {A}tomspektren}, Friedrich Vieweg und Sohn,
  Braunschweig, 1931.

\bibitem{WisemanJonesDoherty2007}
{\sc H.~M. Wiseman, S.~J. Jones, and A.~C. Doherty}, {\em {Steering,
  entanglement, nonlocality, and the Einstein-Podolsky-Rosen paradox}}, Phys.
  Rev. Lett., 98 (2007), p.~140402.

\bibitem{WuShaknov1950}
{\sc C.~S. Wu and I.~Shaknov}, {\em The angular correlation of scattered
  annihilation radiation}, Phys. Rev., 77 (1950), pp.~136--136.

\end{thebibliography}
\end{document}